\newcommand{\NN}{\ensuremath{\mathbb{N}}}
\newcommand{\RR}{\ensuremath{\mathbb{R}}}
\newcommand{\junk}[1]{}
\title{Considerate Equilibrium\thanks{This work was supported by the German Israeli Foundation (GIF) under contract 877/05 and by DFG grant Ho 3831/3-1.}}
\titlerunning{Considerate Equilibrium}
\author{Martin Hoefer\inst{1} \and Michal Penn\inst{2} \and Maria
  Polukarov\inst{3} \and\\ Alexander Skopalik\inst{4} \and Berthold
  V\"ocking\inst{1}} 
\institute{Dept.\ of Computer Science, RWTH Aachen University, Germany\\
  \email{\{mhoefer,voecking\}@cs.rwth-aachen.de} 
  \and Faculty of Industrial Engineering and Management, Technion, Israel\\
  \email{mpenn@ie.technion.ac.il}
  \and School of Electronics and Computer Science, University of Southampton, UK\\
  \email{mp3@ecs.soton.ac.uk} 
  \and School of Physical and Mathematical Sciences, Nanyang Technological University Singapore\\
  \email{skopalik@cs.rwth-aachen.de}
}
\begin{document}
\maketitle
 
\begin{abstract}
  We consider the existence and computational complexity of
  coalitional stability concepts based on social networks. Our
  concepts represent a natural and rich combinatorial generalization
  of a recent approach termed \emph{partition
    equilibrium}~\cite{Feldman09}. We assume that players in a
  strategic game are embedded in a social network, and there are
  coordination constraints that restrict the potential coalitions that
  can jointly deviate in the game to the set of cliques in the social
  network. In addition, players act in a ``considerate'' fashion to
  ignore potentially profitable (group) deviations if the change in
  their strategy may cause a decrease of utility to their neighbors.

  We study the properties of such \emph{considerate equilibria} in
  application to the class of \emph{resource selection games (RSG)}.
  Our main result proves existence of a \emph{considerate equilibrium}
  in all symmetric RSG with strictly increasing delays, for \emph{any}
  social network among the players. The existence proof is constructive
  and yields an efficient algorithm. In fact, the computed considerate
  equilibrium is a Nash equilibrium for the standard RSG
  showing that there exists a state that is stable against selfish and
  considerate behavior simultaneously. In addition, we show results
  on convergence of considerate dynamics.


\end{abstract}

\section{Introduction}
Game theory provides tools for the analysis of the outcome of social
interaction of self-motivated, rational agents. Rationality is usually
captured in a way that agents are acting autonomously in order to
maximize a utility function. This leads to much interest in the study
of stable outcomes in games, making it the central topic in game
theory. In strategic games the standard concept of stability is the
{\it Nash equilibrium (NE)} -- a state resilient to unilateral
strategy changes of players. While a mixed Nash equilibrium is
guaranteed to exist, a pure Nash equilibrium might not exist in
general, though has been proven to exist in several interesting
classes such as {\it congestion
  games}~\cite{Monderer96,Rosenthal73}. A drawback of Nash equilibrium
is that it neglects coalitional deviations by groups of players; these
are captured most prominently by the notion of {\it strong equilibrium
  (SE)}~\cite{Aumann59}, in which no coalition can strictly improve
the utility of \emph{all} participants. A slightly stronger variant
termed {\it super-strong equilibrium
  (SSE)}~\cite{Rozenfeld07,Feldman09} guarantees that no coalition can
strictly improve \emph{any} participant without strictly deteriorating
at least one other participant. SSE postulates the natural and widely
considered condition of (strong) Pareto efficiency~\cite{Myerson04}
for every coalition. However, while stability against deviations by
coalitions of players is a most natural desideratum, it is well-known
that there are only very few strategic games with SE, and SSE are even
harder to guarantee.

In contrast to the assumptions underlying SE and SSE, many real-life
scenarios allow only certain subsets of players to cooperate because a
group of players has to find a deviation, agree on it, and coordinate
individual actions. This is impossible for a subset of players that
are completely unrelated to each other.
%
%
A promising recent approach for limited coalitional deviations was
studied prominently in resource selection games~\cite{Feldman09}. In
this case, there is a given partition of the set of players such that
only sets of the partition can implement coalitional deviations. The
power of this restriction was demonstrated on the concept of SSE - a
\emph{partition equilibrium} is a SSE subject to coalitional
deviations by player sets in the partition only. In contrast to SSE,
it was shown that partition equilibrium always exists in resource
selection games~\cite{AnshelevichSAGT10}, and that the profiles are
also NE - that is, coalitional and unilateral stability are obtained
simultaneously. The restriction of coalitional deviations in partition
equilibrium essentially postulates two structural properties: (1)
coalitions of players that execute a strategy change have to be close
to each other, and (2) their decision must strictly benefit at least
one of them but not strictly deteriorate any other player close to
them. The notion of closeness is defined in both cases simply as being
in the same partition.

In this paper, we significantly strengthen the partition equilibrium
concept by considering coalitional deviations and equilibria based a
rich combinatorial structure derived from a social network among the
players rather than just partitions. In our case, (1) coalitions of
players that execute a strategy change must be cliques in the graph,
and (2) their decision must not strictly deteriorate any neighboring
players. The solution concept naturally corresponding to considerate
behavior is the {\em considerate equilibrium}, i.e., a state in which
(1) no coaliton formed by a clique in the social network can deviate
so that the utility of at least one member of the coalition strictly
improves and (2) none of the players neighboring the clique gets
worse.
Observe that partition equilibrium evolves as a special case of
considerate equilibrium when the social network is composed of a set
of disjoint cliques. To the best of our knowledge, our approach has
not been considered before.



We study considerate behavior in the prominent class of {\it resource
  selection games (RSG)}. In an RSG, each player chooses one of a
finite set of resources, and its cost is given by a delay function
depending on the number of players choosing the resource. RSGs are a
fundamental setting in computer science, operations research and
economics, due to their practical applicability (e.g., in electronic
commerce and communication networks) and plausible analytical
properties. In particular, for strictly increasing delay functions, SE
always exist~\cite{Holzman97,Holzman03}, but SSE do not necessarily
exist~\cite{Feldman09}.
The latter fact is the motivation for studying the effects imposed by
natural restrictions to the coalitional structure on the existence of SSE
initiated by Feldman and Tennenholtz in~\cite{Feldman09}.

\subsection{Our Results}

We show that regardless of the social network, all RSGs with strictly
increasing delay functions possess a considerate equilibrium.
Our proof in Section~\ref{sec:exist} is constructive
and yields an efficient algorithm for computing such an
equilibrium. Indeed, the computed super-strong considerate equilibrium
is an NE for the standard RSG showing that there exists a state that is
stable against selfish and considerate behavior
simultaneously. Observe that the number of cliques might be
exponential in the number of players such that not even the
computation of a single improving move is non-trivial. We solve this
problem by showing that, in an NE, every profitable deviation of a
clique is witnessed by a move of a single player decreasing a suitably
defined potential function. In addition, our proof is fundamentally
different and significantly simpler than the proof for existence for
the special case of partition equilibrium in~\cite{AnshelevichSAGT10}.


In Section~\ref{sec:converge}, we consider convergence properties of
dynamics. Let us remark that the potential function approach from the
existence proof does not imply that the sequential dynamics defined by
deviations of cliques is acyclic, since the single player moves
considered in the existence proof do not necessarily correspond to
allowed improving moves. Indeed, we show that even for identical,
strictly increasing delays there are infinite sequences of improving
moves of cliques. This is in contrast to the dynamics corresponding to
partition equilibrium, for which we can show the finite improvement
property in this setting.

\subsection{Related work}

Using a social network approach to restrict coalitional deviations in
games, our work is related to an emerging area in social sciences,
game theory, and computer science. While the study of social
connections is central to social sciences, and the notion of stability
is central to game theory, a standard tool for analyzing the interplay
between social context and outcome of games has received attention
only recently. Perhaps most relevant in this spirit
are~\cite{Feldman09,AnshelevichSAGT10} on partition equilibrium
discussed above.

The notion of partition equilibrium is related to work on {\it social
  context games}~\cite{Ashlagi08}, where a player's utility can be
affected by the payoffs of other players. For example, a player may be
interested in ranking his payoff as high as possible comparing to the
others' payoffs~\cite{Brandt06}, or a player may care about the total
payoff of a subset of his ``friends'', as in {\it coalitional
  congestion games}~\cite{Hayrapetyan06,Kuniavsky07}. A social context
game is then defined by some underlying game, the social context given
by some topological or graph-theoretic structure of neighborhood, and
aggregation functions capturing the effects of utility changes in the
underlying game on player incentives. In~\cite{Ashlagi08}, RSG are
considered as the underlying games, and four natural social contexts
are studied. However, unlike for partition equilibrium, this work
deals only with unilateral deviations.

While~\cite{Ashlagi08,Feldman09} are initial steps in relating the
social structure to the outcome of a game, they are quite restrictive
in that only particular social contexts and fixed coalitional
structures (partitions) are considered. In addition, they ignore the
phenomenon of considerate behavior which is present in our
work. Similar arguments apply w.r.t.~\cite{Fotakis08}, where fixed
coalition structures in load balancing and congestion games are
studied. Here coalitions act as single ``splittable'' coalition
players that strive to minimize the makespan or the sum of costs of
the agents in the coalition.


\section{Preliminaries and Initial Results}

A \emph{strategic game} is a tuple $(N, (S_i)_{i \in N}, (u_i)_{i \in
  N})$, where $N$ is the set of $n$ \emph{players}, $S_i$ is a
\emph{strategy space} of player $i$. A \emph{state} $s$ of the game is
a vector of strategies $(s_1,\ldots,s_n)$, where $s_i \in S_i$. For
convenience, we use $s_{-i}$ to denote
$(s_1,\ldots,s_{i-1},s_{i+1},\ldots,s_n)$, i.e., $s$ reduced by the
single entry of player $i$. Similarly, for a state $s$ we use $s_C$ to
denote the strategy choices of a coalition $C \subseteq N$ and
$s_{-C}$ for the complement, and we write $s = (s_C,s_{-C})$. The
\emph{utility} of player $i$ in state $s$ is $u_i(s) \in \RR$. For a
state $s$ a coalition $C \subseteq N$ is said to have an
\emph{improving move} if there is $s'_C$ such that $u_i(s'_C,s_{-C}) >
u_i(s)$ for every player $i \in C$. In particular, the improving move
is \emph{unilateral} if $|C|=1$. A state has a \emph{weak improving
  move} if there is $C \subseteq N$ and $s'_{C}$ such that
$u_i(s'_C,s_{-C}) \ge u_i(s)$ for every $i \in C$ and
$u_i(s'_C,s_{-C}) > u_i(s)$ for \emph{at least one} $i \in C$. A
\emph{(pure) Nash equilibrium (NE)}~\cite{Nash51} is a state that has
no unilateral improving moves, a \emph{strong equilibrium
  (SE)}~\cite{Aumann59} a state that has no improving moves, and a
\emph{super-strong equilibrium (SSE)}~\cite{Feldman09} a state that
has no weak improving moves.

To model considerate behavior, we adjust the definition of improving
moves. In particular, there is an undirected, unweighted graph
$G=(N,E)$ over the set of players. For a subset $C \subseteq N$
consider the \emph{neighborhood} of $C$ as $\mathcal{N}(C) = \{j \in N
\mid \exists i \in C, \{i,j\} \in E\}$.
\begin{definition}[Considerate Improving Moves]
  A state $s$ has a \emph{considerate improving move} for a coalition
  $C$ if there is $s'_C$ such that $u_i(s'_C,s_{-C}) > u_i(s)$ for all
  $i \in C$ and $u_j(s'_C,s_{-C}) \ge u_j(s)$ for all $j \in
  \mathcal{N}(C)$. For a \emph{unilateral considerate improving move}
  we have $|C|=1$. A state $s$ has a \emph{weak considerate improving
    move} for a coalition $C$ if there is $s'_C$ such that
  $u_i(s'_C,s_{-C}) \ge u_i(s)$ for all $i \in C \cup \mathcal{N}(C)$
  and $u_i(s'_C,s_{-C}) > u_i(s)$ for at least one $i \in C$.
\end{definition}
Note that every (weak/unilateral) considerate improving move is also a
(weak/uni\-lateral) improving move but not vice versa. To define
coalitional equilibria, let us, for the time being, also assume that
there is a set system of \emph{feasible coalitions} $\mathcal{C}
\subseteq 2^N$.  A \emph{considerate Nash equilibrium (CNE)} is a
state $s$ that has no unilateral considerate improving moves. A
\emph{(super) strong considerate equilibrium ((S)SCE)} is a state $s$
that has no (weak) considerate improving move for a coalition $C \in
\mathcal{C}$. Note that for CNE we implicitly assume $\mathcal{C}$ is
the set of all singleton sets $\{i\}$ for all $i \in N$. Every NE is a
CNE, and every (S)SE is a (S)SCE. The converse only holds for CNE and
NE if $E = \emptyset$. In general SCE and SSCE are SE and SSE only if
$E = \emptyset$ and $\mathcal{C} = 2^N$, respectively. In this way,
existence of social ties and a non-trivial set of feasible coalitions
weaken the structural requirements for existence of equilibrium.

In the rest of the paper, we make the natural assumption that the set
of feasible coalitions corresponds to the set of cliques in $G$. In our
analysis, we focus on weak improving moves and study super strong
considerate equilibria as we believe that this solution concept is
most interesting not only from a technical point of view but also a
natural and convincing model for the interaction of coalitional
structures in the presence of a social network.

\begin{definition}[Considerate Equilibria]
  A \emph{considerate equilibrium (CE)} is a state $s$ that has no
  weak considerate improving move for a coalition corresponding to a
  clique in $G$.
\end{definition}

Note that CE nicely generalizes partition equilibrium. In particular,
a partition equilibrium is a CE if the social network $G$ is
partitioned into isolated cliques. Note that we do not explicitely
assume that the set of feasible coalitions is restricted to maximal
cliques. If the graph is partitioned into isolated cliques, however,
this rather technical assumption made in the definition of partition
equilibrium is a natural consequence of the assumption that the
coalitions behave considerately since one can assume w.l.o.g. that all
members of a partition participate in a coalition as weak improving
moves do not decrease the utility of neighboring players.  \bigskip

\emph{Resource selection games (RSG)}, sometimes referred to as
\emph{singleton congestion} or \emph{parallel link games}, are a basic
class of potential games. There is a set of resources $R$ and $S_i =
R$ for every player $i \in N$. For a state $s$ we denote by
$\ell_r(s)$ the number of players that pick $r \in R$ in $s$. For each
resource $r \in R$ there is a delay function $d_r(x) \in
\NN$. Throughout the paper we assume that all delay functions are
non-negative and strictly increasing. In a state with $s_i = r$,
player $i$ has \emph{cost} $c_i(s) = -u_i(s) = d_r(\ell_r(s))$.

In this paper, we consider RSGs with strictly increasing delays. In
this case, it is known that NE exist~\cite{Rosenthal73}, can be
computed in polynomial time~\cite{Fotakis09}, and are equivalent to
SE~\cite{Holzman97}. Moreover, the games possess a (strong) potential
function~\cite{Monderer96,Holzman97}, i.e., every sequence of
unilateral improving moves has finite length and ends in a
NE/SE. Trivially, by restriction of improving moves, the same holds
also for CNE and SCE. Interestingly, however, even in simplest games
SSE are not guaranteed to exist\footnote{Consider a game with
  $N=\{1,2,3\}$, $R=\{r_1,r_2\}$, and $d_{r_1}(x) = d_{r_2}(x) =
  x$.}. In contrast, we show below that CE always exist. However, even
for identical resources we show that there are infinite sequences of
weak considerate improving moves of coalitions being cliques in
$G$. In contrast, if $G$ is a disjoint set of cliques and CE reduces
to partition equilibrium, a potential function for weak (considerate)
improving moves in games with identical resources exists.

\section{Existence}
\label{sec:exist}

This section contains our main theorem showing the existence of CE in
RSGs with strictly increasing delay functions. The existence proof is
constructive and yields a polynomial time algorithm computing a state
that is both a CE and a standard NE for the RSG showing that the two
equilibrium concepts intersect.

\begin{theorem}
  \label{thm:exist}
  For any RSG with strictly increasing delay functions and any
  associated social network $G$, there exists at least one state that
  is an NE and a CE. There is a polynomial time algorithm computing
  such a state.
\end{theorem}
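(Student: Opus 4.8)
The plan is to start from any Nash equilibrium $s$ of the RSG — one exists and is computable in polynomial time by the cited results — and argue that a suitably chosen NE is already a CE. Since every CE-violating move is in particular a weak improving move of a clique, and an NE already rules out all unilateral improving moves, the only threat is a coalition $C$ forming a clique that has a \emph{weak considerate} improving move: some $i\in C$ strictly improves, nobody in $C$ or in $\mathcal{N}(C)$ gets worse. The key structural observation I would establish is that, in an NE of an RSG with strictly increasing delays, any such coalitional weak improving move can be "witnessed" by a single player: if a clique $C$ can move from $s_C$ to $s'_C$ with the stated properties, then there must exist a resource $r$ that loses a player and a resource $r'$ that gains a player, and by tracing the chain of players one finds a single $i\in C$ whose unilateral switch already weakly improves $i$ without hurting anyone. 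In an NE no player can \emph{strictly} gain by switching, so this witness move must keep $i$'s cost equal while not increasing anyone else's — i.e.\ it is a weak (non-strict) sidestep. The way to make progress is therefore to pick the NE that minimizes a secondary potential, so that no such neutral-but-cascading sidesteps remain.

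Concretely, I would use Rosenthal's potential $\Phi(s)=\sum_{r\in R}\sum_{k=1}^{\ell_r(s)} d_r(k)$ to drive toward an NE, and then among all NE (equivalently, all local minima of $\Phi$ reachable, or all states with a fixed congestion vector — recall that with strictly increasing delays all NE induce the same multiset of loads) break ties with a second objective. A natural choice: sort the player costs and take the NE that is lexicographically smallest in sorted cost vector, or equivalently minimize $\sum_i c_i(s)^2$ or some strictly convex aggregate; since the load vector is fixed across NE this is really choosing how players are assigned to the (fixed) slots. The claim to prove is then: an NE that is optimal for this secondary objective admits no weak considerate improving move of a clique. Suppose it did; using the single-player witness argument, some $i$ in the clique can switch from $r=s_i$ to $r'$ with $d_{r'}(\ell_{r'}(s)+1)=d_r(\ell_r(s))$ (equality, since it's an NE so no strict gain, and the move is weakly improving for $i$) and this switch does not increase anyone's cost — meaning $d_{r'}(\ell_{r'}(s)+1)\le d_r(\ell_r(s))$ forces the new load on $r'$ to have delay no larger, and $r$ loses a player so nobody there suffers. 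Then I would show this neutral swap strictly decreases the secondary objective (e.g.\ it equalizes costs or shifts load toward a cheaper configuration in a way the strictly convex potential detects), contradicting optimality. Finally, for the algorithmic claim: compute an NE in polynomial time, then repeatedly look for a single-player neutral move that decreases the secondary potential; each such move strictly decreases an integer-bounded quantity, so the process terminates, and the single-player witness lemma guarantees that when it halts no clique — despite there being exponentially many — has a weak considerate improving move.

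The main obstacle I anticipate is the witness lemma: showing rigorously that every weak considerate improving move of an arbitrary clique $C$ (which may reassign many players simultaneously among many resources) is certified by a single player's unilateral move that is itself weakly considerate. The subtlety is the "considerate" constraint — the single-player witness must not hurt $C$'s \emph{neighbors}, and the neighbors of a sub-coalition or a single vertex are a superset relationship in the wrong direction ($\mathcal{N}(\{i\})$ is not contained in $\mathcal{N}(C)$ in general), so one has to argue that the witnessing player can be chosen so that its only affected "outsiders" are resource-mates whose load strictly decreases, hence who cannot be harmed. I expect this to hinge on a careful decomposition of the coalitional move into a permutation part (which changes no loads, hence changes no costs, hence is irrelevant) plus a "flow" of net load changes, and then picking $i$ on a resource that is a source of this flow. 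Once that lemma is in place, the potential/tie-breaking argument and the polynomial-time bound should be comparatively routine.
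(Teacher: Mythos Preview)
Your high-level architecture---start from an NE, refine via a secondary potential, and prove a single-player witness lemma so that termination of the refinement implies no clique has a weak considerate improving move---matches the paper's approach. But the specific secondary potential you propose cannot work, and this is a genuine gap rather than a detail to be filled in.

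You correctly note that in a symmetric RSG with strictly increasing delays the NE load vector is unique, so ``this is really choosing how players are assigned to the (fixed) slots.'' But then any function of the cost vector alone---the lexicographically sorted cost vector, $\sum_i c_i(s)^2$, or any strictly convex aggregate of costs---is \emph{constant across all NE}, because the multiset of costs is determined by the load vector. Your refinement step therefore never fires, and you are left with an arbitrary NE, which need not be a CE. The secondary objective must depend on \emph{which} player sits on which resource, and the only structure in the problem that distinguishes players is the social network $G$. The paper's potential is
\[
\phi(s) \;=\; M \sum_{i\in N} |\mathcal{N}_{i,s_i}(s)| \;+\; \sum_{r\in R} d_r(\ell_r(s)),
\]
where $\mathcal{N}_{i,s_i}(s)$ is the set of $i$'s graph-neighbors on $i$'s own resource and $M$ dominates the second sum. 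The primary term counts, over all players, how many of their friends share their resource; the refinement moves a player from a high resource to a low resource whenever doing so strictly reduces this count (with the delay term as a tie-breaker). This is the missing idea.

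A second, smaller issue: the witnessing single-player move need not itself be considerate. When a player $i$ moves from a high resource $r$ to a low resource $r'$, the players already on $r'$ (possibly including neighbors of $i$) are hurt; the point is only that $\phi$ strictly decreases and the state remains an NE. So the lemma you should aim for is not ``every clique deviation is witnessed by a single-player considerate move'' but rather ``if an NE admits a weak considerate improving move by some clique, then it admits a single-player high-to-low move that strictly decreases $\phi$.'' Proving this requires comparing, for some player in the deviating clique on a high resource, the number of its neighbors on its current resource versus on some target low resource; the counting argument is where the considerateness hypothesis (no neighbor outside $C$ on the target can be harmed) actually gets used.
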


\begin{proof}
  We describe a process that starts in a Nash equilibrium and
  converges to a CE.  This process consists of movements of single
  players. Every strategy profile in this sequence is a standard Nash
  equilibrium.

  Consider a state $s$. Let $d_{\max}$ denote the maximal delay of a
  player in $s$. Note that in a Nash equilibrium, each used resource
  $r$ has either delay $d_r(\ell_r) = d_{\max} = d_{\max}$ or
  $d_r(\ell_r) < d_{\max}$ and $d_r(\ell_r + 1) \ge d_{\max}$. In the
  former case, we call that resource a {\em high} resource, in the
  latter case, we call it a {\em low} resource if additionally
  $d_r(\ell_r + 1) = d_{\max}$. Let $\mathcal{N}_{i,r}(s)$ denote the
  set of neighbors of player $i$ in $G$ that are on resource $r$ in
  $s$.
  We are now ready to describe the process:
  \begin{enumerate}
  \item Compute a Nash equilibrium $s$.
  \item If there is a player $i$ placed on a high resource $r$ and
    there is a low resource $r'$ with $|\mathcal{N}_{i,r}(s)| >
    |\mathcal{N}_{i,r'}(s)|$ then set $s=(s_{-i},r')$, and repeat this
    step.
  \item If there is a player $i$ placed on a high resource $r$ and
    there is a low resource $r'$ with $|\mathcal{N}_{i,r}(s)| =
    |\mathcal{N}_{i,r'}(s)|$ and $d_{r}(\ell_{r}(s) - 1) <
    d_{r'}(\ell_{r'}(s))$ then set $s=(s_{-i},r')$, and continue with
    step 2.
  \item Output $s$.
  \end{enumerate}
  Note that each state produced by this process is a Nash
  equilibrium. During this process, the following potential
  function
  $$\phi(s) = \sum_{i \in N} M |\mathcal{N}_{i}(s)| + \sum_{r
    \in R} d_r(\ell_r(s))$$ 
  decreases strictly from step to step, where we use
  $\mathcal{N}_{i}(s) = \mathcal{N}_{i,s_i}(s)$ as a shorthand for the
  neighbors of $i$ on the same resource and assume $M > \sum_{r \in R}
  d_r(n)$. One can easily modify the delay functions such that $M = n
  |R|^2$ without changing the players' preferences which implies that
  the process terminates after polynomially many steps.

  To prove that this process results in a CE, we show that if a state
  $s$ is a NE and there exists weak considerate improving move $s'_C$
  then there is also a move of a single player $i \in C$ as described
  above.

  Let $H$ and $L$ denote the set of high and low resources in $s$,
  respectively.  Let $R_h$ be the set of resources that are high in
  $s$ but no longer high in $(s'_C,s_{-C})$, and let $R_l$ be the set
  of resources that are low in $s$ and become high in $(s'_C,s_{-C})$.
  By definition, $R_h \subseteq H$ and $R_l \subseteq L$.  Let $N_h$
  be the set of players of $C$ on resources of $R_h$ in $s$, and let
  $N_l$ be the set of players of $C$ on resources of $R_l$ in $s$.

  \begin{lemma}\label{lemma:move}
    During the move $s'_C$, all players in $N_l$ are moving from
    resources in $R_l$ to resources outside of $R_l$. In turn, $|N_l|
    + |R_l|$ players move from resources in $H$ to the resources in
    $R_l$.  At least $|N_l| + |R_l|$ players are leaving $R_h$ towards
    resources outside of $R_h$.
  \end{lemma}

  \begin{proof}
    Since $s'_C$ is a weak considerate improving move, all players in
    $N_l$ are moving from resources in $R_l$ to resources outside of
    $R_l$ as their delay would increase, otherwise. These players can
    only be replaced by players of $H$ as other players would have an
    increased delay after the move, otherwise.  In turn, altogether
    $|N_l| + |R_l|$ players need to move from $H$ to $R_l$ so that the
    resources of $R_l$ become high resources after the move.
    Furthermore, we observe that the number of players on resources in
    $H \setminus R_h$ does not change during the considered move, and
    there are no players entering $H \setminus R_h$ from outside of
    $H$ as such players would have an increased delay, otherwise.  As
    a consequence, there must be at least $|N_l| + |R_l|$ players that
    are leaving $R_h$ towards $H \setminus R_h$ or $R_l$ in order to
    have $|N_l| + |R_l|$ players that move from $H$ to $R_l$.  This
    proves Lemma~\ref{lemma:move}.\qed
  \end{proof}

  The lemma implies
  \begin{eqnarray}\label{eqn4}
    |N_h| \ge |N_l| + |R_l| \enspace .
  \end{eqnarray}
  Let $\max_h = \max_{i\in N_h} {\cal N}_{i}(s)$ denote the maximum
  number of neighbors that a player of $N_h$ has on his resource. The
  definition $\max_h$ implies
  \begin{eqnarray}\label{eqn1}
    |N_h| \le (\text{max}_h+1) \cdot |R_h| \enspace .
  \end{eqnarray}
  Note that no player of $C$ has a neighbor that has chosen a resource
  from $R_l$ and is not in $C$.  Otherwise, this neighbor's delay
  would increase during the move so that $s'_C$ would not be a
  considerate move.  Therefore, we can set $\min_l = \min_{i\in N_h, r
    \in R_l} {\cal N}_{i,r}(s)$, where the choice of $i$ is
  irrelevant. The definition of $\min_l$ immediately implies
  \begin{eqnarray}\label{eqn2}
    |N_l| \ge \text{min}_l \cdot |R_l| \enspace .
  \end{eqnarray}
  Let us derive some more helpful equations regarding the different
  kinds of resources.  For each resource that decreases its load
  during the improving move, there is at least one resource that
  increases its load by one because the number of players on each low
  resource can only increase by one.  This gives
  \begin{eqnarray}\label{eqn3}
    |R_h| \le |R_l| \enspace .
  \end{eqnarray}
  Combining the Equations \ref{eqn1}, \ref{eqn4}, and \ref{eqn2} gives
  \begin{eqnarray}\label{eqn5}
    (\text{max}_h+1) \cdot |R_h| \ge |N_h| \ge |N_l| + |R_l| \ge (\text{min}_l+1) \cdot |R_l|  \enspace .
  \end{eqnarray}
  Now, we distinguish between the following two cases.

  \paragraph {Case 1:} $\max_h > \min_l$. In this case, we can set $i
  = \arg\max_{j\in N_h} {\cal N}_{j}(s)$ and $r' = \arg\min_{r \in
    R_l} {\cal N}_{i,r}(s)$, which satisfies the conditions of step 2
  of the process.

  \paragraph{Case 2:} $\max_h \le \min_l$.  In this case,
  Equation~\ref{eqn5} yields $|R_h| \ge |R_l|$, which, coupled with
  Equation~\ref{eqn3}, implies $|R_h| = |R_l|$. Substituting this
  equality back into the Equation~\ref{eqn5} gives $\max_h \ge \min_l$
  which implies $\max_h = \min_l$.  Define $q = |R_h| = |R_l|$ and $k
  = \max_h = \min_l$.  Now Equations~\ref{eqn1} and~\ref{eqn2} yield
  $|N_h| \le |N_l| + q$, which in combination with Equation~\ref{eqn4}
  yields $|N_h| = |N_l| + q$.

  On average, the resources in $R_l$ hold $|N_l|/q$ players from $C$
  in state $s$ and the resources $R_h$ hold $|N_h|/q$ players from
  $C$.  We claim that this implies that each resource in $R_l$ holds
  exactly $|N_l|/q$ players from $C$; and each resource in $R_h$ holds
  exactly $|N_h|/q$ players from $C$ and no additonal neighbour of one
  of them.  To see this, let $r_h$ denote a resource from $R_h$
  holding a maximum number of players from $C$ and let $r_l$ denote a
  resource from $R_l$ holding a minimum number of players from
  $C$. Let $i \in N_h$ be a player assigned to $r_h$. As $s'_C$ is a
  considerate move, $i$ does not have neighbors outside of $C$ on
  $r_l$. Thus, if the claim above would not hold, $i$ would have
  either at least $|N_h|/q$ neighbors on $r_h$ or strictly less than
  $|N_h|/q-1 = |N_l|/q$ neighbors on $r_l$, which would imply $\max_h
  > \min_l$ and thus contradict our assumption. As a consequence,
  $|\mathcal{N}_{i,r}(s)| = k = |\mathcal{N}_{i,r'}(s)|$, for every $i \in
  N_h$, $r \in R_h$, and $r' \in R_l$.

  Now Lemma~\ref{lemma:move} yields that each of the $q$ resources in
  $R_l$ is left by its $k$ players from $C$ and each of the $q$
  resources in $R_h$ is left by its $k+1$ players from $C$.

  We make a few further observations: The definition of $R_h$ implies
  that the number of players on a resource from $H \setminus R_h$ does
  not decrease during the considered move. Besides, this number cannot
  increase due to a weak improving move. Next consider a resource $r
  \not\in H \cup R_l$. The definition of $R_l$ implies that the number
  of players on $r$ cannot increase during a weak improving move. Now
  suppose the number of players on $r$ would decrease. Then there is a
  leaving player $i$, who moves to either $R_h$ or another resource in
  $L \setminus R_l$, as its delay would increase, otherwise. In the
  latter case, a different player must make room for $i$. By following
  this player, we can iteratively construct a chain of moving players
  until finally there is a player that moves to a resource in $R_h$.
  Thus, together with the players leaving the resources in $R_l$ there
  are at least $q k +1$ that need to migrate to a resource with a
  delay of less than $d_{\max}$ (after the move). However, the
  resources in $R_h$ have only a capacity for taking $qk$ many of such
  players. Hence, the number of players on any resource outside of
  $R_l$ or $R_h$ does not change during the considered move.

  Now consider one of the players from $N_l$. During the considered
  move, this player migrates to another resource having a delay
  strictly less than $d_{\max}$ (after the considered move). If this
  resource does not belong to $R_h$ then another player needs to leave
  this resource in order to compensate for the arriving player. Now we
  follow that player and, iteratively, construct a chain of moving
  players leading from a resource in $R_l$ to a resource in $R_h$. In
  this manner, we can decompose the set of moving players into a
  collection of $qk$ many chains each of which leads from $R_l$ to
  $R_h$. As we are considering a weak improving move the delays in
  each of these chains does not increase and there is at least one
  such chain leading from a resource $r' \in R_l$ to a resource $r \in
  R_h$ with $d_{r}(\ell_{r}(s) - 1) < d_{r'}(\ell_{r'}(s))$.  We
  choose an arbitrary player $i \in N_h$ assigned to resource $r$ in
  $s$. We have shown above that, for this player, it holds
  $|\mathcal{N}_{i,r}(s)| = |\mathcal{N}_{i,r'}(s)|$. Thus, player $i$
  satisfies the condition in step 3 of our process, which completes
  our analysis for Case 2.  \medskip

  This shows that, when the process terminates, there is no weak
  considerate improving move.  Therefore, the resulting state is an
  CE. \qed
\end{proof}

\section{Convergence}
\label{sec:converge}

Next we show that the dynamics of weak considerate improving moves by
general cliques does not have the finite improvement property, i.e.,
the dynamics corresponding to CE might cycle
(Theorem~\ref{thm:cycle}). Our construction works even for resources
with identical delays. This separates considerate equilibrium from
partition equilibrium as, in the same setting, the dynamics
corresponding to partition equilibrium admits the finite improvement
property (Proposition~\ref{thm:weaklyAcyclic}).
\begin{theorem}
  \label{thm:cycle}
  There are symmetric RSGs with strictly increasing and identical
  delays and starting states, for which there are infinite sequences
  of weak considerate improving moves by coalitions that are cliques
  in $G$.
\end{theorem}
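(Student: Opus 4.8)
The plan is to exhibit an explicit symmetric RSG together with a social network $G$ and to construct a finite cyclic sequence of weak considerate improving moves by cliques of $G$, which can then be repeated forever. Since the delays are required to be identical and strictly increasing, I can fix $d_r(x) = x$ for every resource $r \in R$; then the cost of a player equals the load on its chosen resource. The task reduces to finding a starting state $s^{(0)}$, a list of cliques $C_1, C_2, \ldots, C_t$, and replacement strategy vectors $s'_{C_1}, \ldots, s'_{C_t}$ such that (i) each move is a weak considerate improving move for its clique — every player in $C_k \cup \mathcal{N}(C_k)$ is weakly better off and at least one player in $C_k$ is strictly better off — and (ii) after applying all $t$ moves we are back in $s^{(0)}$ (up to a relabeling that is an automorphism of both the game and $G$, which suffices for an infinite sequence).

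First I would choose the number of resources and players small enough to make the bookkeeping manageable but large enough to have ``slack'': the key phenomenon driving a cycle is that a clique can perform a weak improving move in which loads are permuted among resources — some resource drops from load $x{+}1$ to $x$ while another rises from $x$ to $x{+}1$ — so that within the clique some members strictly gain and the rest merely trade one load-$x$ resource for another, and the social network is arranged so that no neighbor outside the clique ever sits on a resource whose load increases. Concretely I expect a construction with a handful of resources all carrying nearly-equal loads, and a cyclic ``rotation'' pattern: at each step one clique shifts a block of its players off an overloaded resource onto an underloaded one, the strict gainers being precisely those players, while other clique members and all external neighbors are insulated. The cliques $C_1,\ldots,C_t$ will themselves be arranged in a rotationally symmetric pattern in $G$ so that the composition of the $t$ moves is a symmetry of the configuration.

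The main obstacle, and where I would spend the most care, is the considerateness constraint interacting with the requirement that loads must actually be permuted (not merely decreased) — since $s^{(0)}$ can be taken to be a Nash equilibrium and by Theorem~\ref{thm:exist} Nash equilibria that are also CE exist, any cycle must pass through non-equilibrium-type configurations, so the moves cannot be pure improvements and must involve some players moving ``sideways'' onto equally-loaded resources while their neighbors outside the clique are kept off the resource that fills up. Designing $G$ so that for every clique $C_k$ in the cycle, $\mathcal{N}(C_k)$ avoids the resource whose load rises under $s'_{C_k}$, while simultaneously each $C_k$ is genuinely a clique, is the delicate combinatorial core. I would likely build $G$ as a union of overlapping cliques with carefully chosen intersections, verify the clique property by inspection, and then check the weak-considerate condition move-by-move; finally I would observe that the whole construction has a cyclic automorphism of order $t$ mapping move $k$ to move $k{+}1$, so iterating the block of $t$ moves yields the desired infinite sequence. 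A small concrete instance (on the order of a few resources and a dozen players) should suffice, and I would present it with a figure or an explicit table of loads at each step.
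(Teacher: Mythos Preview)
Your overall strategy---fix $d_r(x)=x$, exhibit an explicit social network $G$, and build a finite cyclic sequence of weak considerate improving moves with a rotational automorphism so it can be repeated forever---is exactly the paper's approach. You also correctly isolate the central tension: to cycle, some moves must merely permute loads, with ``sideways'' movers in the clique and all external neighbors kept off the resource whose load rises.

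Where your plan falls short is the scale estimate and, more importantly, the missing structural mechanism. A ``small concrete instance on the order of a few resources and a dozen players'' almost certainly will not work. The paper's construction uses $19$ modular \emph{blocks}, each containing $14$ players and $5$ resources (so $266$ players and $95$ resources in all), and each block \emph{in isolation is acyclic}. The cycle is created only through carefully designed inter-block cliques. The key idea---which your proposal does not anticipate---is that certain ``reset'' transitions inside a block (a pure cyclic swap of two or three players that benefits nobody locally) are executed by a clique spanning several blocks, and the mandatory strict improver is a player in a \emph{different} block who is simultaneously performing a genuine improving step there. Thus the strict gainer and the load-permuting players are far apart in the game and linked only through $G$. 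Your description (``the strict gainers being precisely those players'' who shift off the overloaded resource) keeps the gainer and the permuters in the same place; under that picture every move you write down will tend to decrease a potential of the type used in Theorem~\ref{thm:exist}, and you will not close the cycle.

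To repair the plan: design several identical sub-games (blocks), give each a short internal improving sequence plus a ``reset'' swap that by itself is not an improving move, and then add inter-block cliques so that one block's internal improvement can serve as the strict-improver witness for another block's reset. The rotational symmetry you already intend then acts on the blocks, and a careful schedule (the paper stagger-initializes the $19$ blocks at different phases of their internal sequence) lets you shift the whole configuration by one block per round.
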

\begin{proof}
  For the proof we construct a game with a modular structure. Our game
  consists of a number of smaller games, referred to as
  \emph{blocks}. Each block consists of 14 players and 5 resources,
  and by itself it is acyclic. However, by creating social ties across
  blocks, we create larger cliques that are able to perform
  ``resets'' in one block while making improvements in other
  blocks. By a careful scheduling of such reset moves we construct an
  infinite sequence of moves.
  
  More formally, we have 19 blocks, and in each block $i$, we have 14
  players. There are 8 players $B^i,C^i,D^i,E^i,F^i,G^i,P^i,Q^i$
  involved in our sequence, while 6 additional ``dummy'' players never
  move. The dummy players are singleton nodes in the social network
  and are only required to, in essence, simulate non-identical
  resources by increasing some of the delays to larger values. The
  social graph consists of internal links within each block and
  inter-block connections as follows. For each block, there are edges
  $\{B^i,F^i\}$, $\{C^i,E^i\}$ and $\{D^i,G^i\}$. In addition, for
  each $i=1,...,19$ there are two inter-block cliques,
  \begin{itemize}
  \setlength{\itemsep}{0pt}
  \setlength{\parsep}{0pt}
  \item $\{D^i,P^i,P^{i+1},B^{i+1},D^{i+2},P^{i+2},C^{i+6},E^{i+6}\}$ and
  \item $\{D^i,Q^i,Q^{i+1},C^{i+1},D^{i+2},Q^{i+2},B^{i+9},F^{i+9}\}$,
  \end{itemize}
  where the exponent is meant to cycle through the numbers 1 to 19,
  i.e., above $P^j$ means $P^{((j-1) \text{ mod } 19)+1}$.

  The 95 resources are denoted by $r^i_j$ with $i=1,\ldots,19$,
  $j=1,\ldots,5$. The delay functions are identical $d_r(x) = x$ for
  all $r \in R$. Note that in general, our example does not require
  linear delays, it suffices to ensure $d_r(3) > d_r(2)$.

  Let us consider a single block $i$ and a sequence of six states
  within this block depicted in Fig.~\ref{fig:blockSequence}.
  \begin{figure}[h]
    \begin{center}
      \begin{tabular}{ c | c c c | c c}
        & \hspace{0.3cm} $r_1^i$ \hspace{0.3cm} & \hspace{0.3cm} $r_2^i$ \hspace{0.3cm} & \hspace{0.3cm} $r_3^i$ \hspace{0.3cm} & \hspace{0.3cm} $r_4^i$ & \hspace{0.3cm} $r_5^i$ \hspace{0.3cm} \\
        \hline \hline
        & $C^i$ & $B^i$ &       & $P^i$ & $Q^i$ \\
        \hspace{0.3cm} $\alpha$ \hspace{0.3cm}   & $E^i$ & $D^i$ & $F^i$ & x     & x     \\
        & x     & $G^i$ & x     & x     & x     \\
        \hline
        & $C^i$ &       & $D^i$   & $P^i$ & $Q^i$ \\
        $\beta$    & $E^i$ & $B^i$ & $F^i$ & x     & x     \\
        & x     & $G^i$ & x     & x     & x     \\
        \hline
        &       & $C^i$   & $D^i$   & $P^i$ & $Q^i$ \\
        $\gamma$   & $E^i$ & $B^i$ & $F^i$ & x     & x     \\
        & x     & $G^i$ & x     & x     & x     \\
        \hline \hline
        &       & $C^i$ & $B^i$   & $P^i$ & $Q^i$ \\
        $\delta$   & $E^i$ & $D^i$ & $F^i$ & x     & x     \\
        & x     & $G^i$ & x     & x     & x     \\
        \hline
        & $D^i$ &       & $B^i$   & $P^i$ & $Q^i$ \\
        $\epsilon$ & $E^i$ & $C^i$ & $F^i$ & x     & x     \\
        & x     & $G^i$ & x     & x     & x     \\
        \hline
        & $D^i$ & $B^i$ &       & $P^i$ & $Q^i$ \\
        $\zeta$    & $E^i$ & $C^i$ & $F^i$ & x     & x     \\
        & x     & $G^i$ & x     & x     & x     \\
        \hline \hline
        & $C^i$ & $B^i$ &       & $P^i$ & $Q^i$ \\
        $\alpha$   & $E^i$ & $D^i$ & $F^i$ & x     & x     \\
        & x     & $G^i$ & x     & x     & x     \\         
      \end{tabular}
      \caption{\label{fig:blockSequence} Sequence of six states within
        a block $i$ that are attained during an infinite sequence of
        weak considerate improving moves.}
    \end{center}
  \end{figure}
  Note that $\alpha\rightarrow\beta$ represents a weak considerate
  improving move for $\{D^i,G^i\}$, where $D^i$ performs the move, and
  $G^i$ strictly improves. Similarly, $\beta\rightarrow\gamma$ is a
  weak considerate improving move for $\{C^i,E^i\}$,
  $\delta\rightarrow\epsilon$ for $\{D^i,G^i\}$, and
  $\epsilon\rightarrow\zeta$ for $\{B^i,F^i\}$. The steps $\gamma \to
  \delta$ and $\zeta \to \alpha$ are resets, in which a cyclic switch
  is performed and no player within the block strictly improves. It
  suffices to show that these steps can be implemented with moves by
  inter-block cliques.

  Consider the first reset $\gamma \to \delta$, in which $D^i$ and
  $B^i$ swap places, and for simplicity assume w.l.o.g.\ that
  $i=5$. This swap is executed in three moves, where we first swap in
  $P^5$ for $D^5$, then swap $P^5$ and $B^5$ and finally swap out
  $P^5$ to bring $D^5$ back in. This cyclic switch is the result of
  the following sequence of weak considerate improving moves: (1)
  coalition $\{D^3,P^3,P^4,B^4,D^5,P^5,C^9,E^9\}$ applies a deviation
  where $D^5$ and $P^5$ exchange their places, and $C^9$ moves away
  from $E^9$ in block 9 as $\beta\rightarrow\gamma$ prescribes; (2)
  coalition $\{D^4,P^4,P^5,B^5,D^6,P^6,C^{10},E^{10}\}$ improves by
  swapping $P^5$ and $B^5$, and moving $C^{10}$ away from $E^{10}$ in
  block 10; (3) finally, $D^5$ and $P^5$ swap with coalition
  $\{D^5,P^5,P^6,B^6,D^7,P^7,C^{11},E^{11}\}$ where $C^{11}$ moves
  away from $E^{11}$ in block 11. In the final dynamics, we will use
  these moves also to simultaneously perform swaps in the other blocks
  3, 4, 6, and 7.

  The second reset swap $\zeta \to \alpha$ by $D^5$ and $C^5$ can be
  done in similar fashion by a circular swap involving $Q^5$ and using
  the $B^i$ and $F^i$ players of blocks $i=12,13,14$. Note that our
  edges are carefully designed not to generate any undesired
  connections. In particular, $D^5$, $P^5$, $B^5$ rely on the movement
  of $C^9$, $C^{10}$ and $C^{11}$ to execute their swaps. During these
  swaps, $B^9$, $B^{10}$ and $B^{11}$ are deteriorated. None of the
  deteriorated players are attached to players in the respective
  improving coalitions, i.e., none of $D^3$, $P^3$, $P^4$, $B^4$,
  $D^5$ or $P^5$ are friends with $B^9$, none of $D^4$, $P^4$, $P^5$,
  $B^5$, $D^6$ or $P^6$ are friends with $B^{10}$, and none of $D^5$,
  $P^5$, $P^6$, $B^6$, $D^7$ or $P^7$ are friends with $B^{11}$. In
  addition, for making the switch between $D^5$, $Q^5$ and $C^5$ we
  use the movement of $B^{12}$, $B^{13}$ and $B^{14}$. Note that none
  of the players required to execute the switches are friends with
  $C^{12}$, $C^{13}$ or $C^{14}$, respectively.

  An infinite sequence of weak considerate improving moves can now,
  for example, be obtained from a starting state as follows. We indicate
  for each block in which state $\alpha$ to $\zeta$ it is
  initialized. Here $\gamma_1$, $\gamma_2$, $\zeta_1$, and $\zeta_2$
  indicate the intermediate states of the corresponding circular
  resetting swaps.
  \[
  \begin{tabular}{c|c|c|c|c|c|c|c|c|c|c|c|c|c|c|c|c|c|c}
    \hspace{0.02cm}
    1 \hspace{0.02cm} & \hspace{0.02cm} 2 \hspace{0.02cm} & \hspace{0.02cm} 3 \hspace{0.02cm} & \hspace{0.02cm} 4 \hspace{0.02cm} & \hspace{0.02cm} 5 \hspace{0.02cm} & \hspace{0.02cm} 6 \hspace{0.02cm} & \hspace{0.02cm} 7 \hspace{0.02cm} & \hspace{0.02cm} 8 \hspace{0.02cm} & \hspace{0.02cm} 9 \hspace{0.02cm} & \hspace{0.02cm} 10 \hspace{0.02cm} & \hspace{0.02cm} 11 \hspace{0.02cm} & \hspace{0.02cm} 12 \hspace{0.02cm} & \hspace{0.02cm} 13 \hspace{0.02cm} & \hspace{0.02cm} 14 \hspace{0.02cm} & \hspace{0.02cm} 15 \hspace{0.02cm} & \hspace{0.02cm} 16 \hspace{0.02cm} & \hspace{0.02cm} 17 \hspace{0.02cm} & \hspace{0.02cm} 18 \hspace{0.02cm} & \hspace{0.02cm} 19 \hspace{0.02cm}\\
    \hline 
    $\zeta_2$ & $\zeta_1$ & $\zeta$ & $\zeta$ & $\zeta$ & $\zeta$ & $\zeta$ & $\zeta$ & $\zeta$ & $\epsilon$ & $\delta$ & $\gamma_2$ & $\gamma_1$ & $\gamma$ & $\gamma$ & $\gamma$ & $\gamma$ & $\beta$ & $\alpha$
  \end{tabular}
  \]
  In the first step, we can simultaneously advance blocks 1-3 from
  $(\zeta_2, \zeta_1, \zeta)$ to $(\alpha, \zeta_2, \zeta_1)$ using
  movement of $B^{10}$, which advances block 10 to $\zeta$. In the
  next step we advance blocks 12-14 from $(\gamma_2,\gamma_1,\gamma)$
  to $(\delta, \gamma_2,\gamma_1)$ using movement of $C^{18}$, which
  advances block 18 to $\gamma$. Next, we make two internal switches
  in blocks 11 from $\delta$ to $\epsilon$ and 19 from $\alpha$ to
  $\beta$. In this way, we have shifted the state sequence by one
  block, which implies that we can repeat this sequence
  endlessly. \qed
\end{proof}

In contrast, observe that if the graph is a set of disjoint cliques,
then for games with identical and strictly increasing delay function
we can easily construct a potential function showing existence and
acyclicity with respect to weak (considerate) improving moves.
\begin{proposition}
  \label{thm:weaklyAcyclic}
  In every symmetric RSG with strictly increasing, identical delays
  every sequence of weak improving moves of allowed partitions is
  finite and ends in a partition equilibrium.
\end{proposition}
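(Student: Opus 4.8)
The plan is to exhibit a simple integer potential that strictly decreases along every weak considerate improving move of a partition class; boundedness then forces every such sequence to be finite, and a maximal one ends by definition in a partition equilibrium. The only genuinely non-obvious step is guessing the right potential — Rosenthal's potential and the sorted vector of costs are both \emph{not} monotone here — so let me say up front what it is and why it works.

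First I would normalise the game. Since all delay functions are identical and strictly increasing, a player always strictly prefers a less crowded resource independently of the precise function, so the collection of weak (considerate) improving moves is unchanged if every $d_r$ is replaced by $d_r(x)=x$; thus I may assume $c_i(s)=\ell_{s_i}(s)$, and in particular $\sum_{i\in N}c_i(s)=\sum_{r\in R}\ell_r(s)^2$. I would also record the elementary observation that when $G$ is a disjoint union of cliques and $W$ is one of these cliques (a class of the partition), then $\mathcal{N}(W)\subseteq W$; hence a weak considerate improving move for $W$ is simply a reassignment $s'_W$ of the members of $W$ with $c_i(s'_W,s_{-W})\le c_i(s)$ for all $i\in W$ and $c_i(s'_W,s_{-W})<c_i(s)$ for at least one $i\in W$. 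Crucially, during such a move only the players of $W$ change strategies, so for every class $W'\neq W$ and every resource $r$ the occupancy number $n_{W',r}$ (the number of members of $W'$ on $r$) stays the same.

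The potential I would use is, writing $n_{W,r}(s)$ for the number of members of class $W$ on resource $r$ in $s$,
\[
  \Phi(s)\;=\;\sum_{i\in N}c_i(s)\;+\;\sum_{W}\sum_{r\in R} n_{W,r}(s)^2 ,
\]
the outer sum ranging over the classes of the partition; this is a non-negative integer. To see that a weak considerate improving move $s\to s'$ of a class $W$ strictly decreases $\Phi$, put $a_r=\ell_r(s)$, $b_r=n_{W,r}(s)$ and $\delta_r=n_{W,r}(s')-n_{W,r}(s)$; since only members of $W$ move, $\ell_r(s')=a_r+\delta_r$, $n_{W,r}(s')=b_r+\delta_r$, $\sum_r\delta_r=0$, and every $n^2$-term of a class $\neq W$ cancels in $\Phi(s')-\Phi(s)$. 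Therefore
\[
  \Phi(s')-\Phi(s)=\sum_r\bigl[(a_r+\delta_r)^2-a_r^2\bigr]+\sum_r\bigl[(b_r+\delta_r)^2-b_r^2\bigr]=2\sum_r\delta_r\,(a_r+b_r+\delta_r),
\]
while on the other hand $\sum_{i\in W}c_i(s)=\sum_r b_ra_r$ and $\sum_{i\in W}c_i(s')=\sum_r(b_r+\delta_r)(a_r+\delta_r)$ yield $\sum_{i\in W}c_i(s')-\sum_{i\in W}c_i(s)=\sum_r\delta_r(a_r+b_r+\delta_r)$ as well. Hence $\Phi(s')-\Phi(s)=2\bigl(\sum_{i\in W}c_i(s')-\sum_{i\in W}c_i(s)\bigr)<0$, the inequality holding because the move weakly decreases the cost of every member of $W$ and strictly decreases the cost of at least one of them.

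Finally I would conclude: $\Phi$ is a non-negative integer that strictly drops at every step, so every sequence of weak considerate improving moves of partition classes is finite, and a maximal one terminates in a state with no such move for any class, i.e.\ in a partition equilibrium. As indicated, I expect the only non-routine point to be spotting the correcting term $\sum_W\sum_r n_{W,r}^2$, which is exactly what makes the difference telescope to (twice) the change in the moving class's total cost. I would also stress that this telescoping uses essentially that the allowed coalitions \emph{partition} $N$, so that a move of one class leaves all other classes' occupancy numbers untouched; for overlapping cliques the extra cross-terms survive and the argument collapses, which is consistent with the cycling construction in Theorem~\ref{thm:cycle}.
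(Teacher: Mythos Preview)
Your argument is correct. The normalisation to $d_r(x)=x$, the observation that $\mathcal{N}(W)\subseteq W$ when $G$ is a disjoint union of cliques, and the algebraic identity showing $\Phi(s')-\Phi(s)=2\bigl(\sum_{i\in W}c_i(s')-\sum_{i\in W}c_i(s)\bigr)$ are all valid; the last step uses precisely that only the occupancy numbers of the moving class change, as you emphasise. Your remark that Rosenthal's potential fails is also accurate: with $W=\{A,B\}$, $C\notin W$, state $(A,B\mid C)$ on two resources and the move of $A$ to $C$'s resource, one gets a weak improving move for $W$ under which $\sum_r\ell_r^2$ is unchanged.

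The paper's own proof is much terser and rests on the same pivot: after normalising to $d_r(x)=x$ it simply notes that every weak improving move of a partition class strictly decreases that class's total cost, and then invokes the results of Fotakis et al.\ on coalitional congestion games with linear delays to conclude the finite improvement property. Your potential $\Phi(s)=\sum_r\ell_r(s)^2+\sum_{W}\sum_r n_{W,r}(s)^2$ is exactly an explicit, self-contained witness for what the paper obtains by citation; the conceptual route is the same, but your version has the advantage of not relying on an external reference and of making transparent why the partition structure (no overlap between classes) is essential.
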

Note that in this case we can assume w.l.o.g.\ that $d_r(x) = x$ for
all $r \in R$. Also, each weak improving move decreases the sum of
costs of all players in the partition. Thus, the results
of~\cite{Fotakis08} for linear delays directly imply the finite
improvement property.

\bibliographystyle{plain}
\bibliography{../../Bibfiles/literature}

\end{document}